\newcommand{\f}{\operatorname}
\theoremstyle{plain}
\newtheorem{theorem}{Theorem}[section]
\newtheorem{proposition}[theorem]{Proposition}
\theoremstyle{definition}
\theoremstyle{remark}
\begin{document}

\articletype{ }

\title{An Extended Poisson Family of Life Distribution: A Unified Approach in Competitive and Complementary Risks}

\author{ Pedro L. Ramos$^{\rm ab}$$^{\ast}$\thanks{$^\ast$Corresponding author. Email: pedrolramos@usp.br
\vspace{6pt}}, Dipak K. Dey$^{\rm b}$, Francisco Louzada$^{\rm a}$ and Victor H. Lachos$^{\rm b}$ \\\vspace{6pt}  $^{a}${Institute of Mathematical Science and Computing, University of  S\~ao \\ Paulo, S\~ao Carlos, Brazil} \\ 
$^{b}${Department of Statistics, University of  Connecticut, Storrs, CT, USA}}

\maketitle

\begin{abstract}
In this paper, we introduce a new approach to generate flexible parametric families of distributions. These
models arise on competitive and complementary risks scenario, in which the lifetime associated
with a particular risk is not observable; rather, we observe only the minimum/maximum lifetime
value among all risks. The latent variables have a zero truncated Poisson distribution. For the proposed family of distribution, the extra shape parameter has an important physical interpretation in the competing and complementary risks scenario. The mathematical properties and inferential procedures are discussed. The proposed approach is applied in some existing distributions in which it is fully illustrated by an important data set.
\end{abstract}

\begin{keywords}
Extended Poisson-family distribution;  instantaneous failures; generalized Poisson distribution; zero truncated Poisson distribution.
\end{keywords}

\section{Introduction}

There has been a renewed interest among researchers in presenting new class of distribution for describing these problems. For instance, Marshall and Olkin \cite{marshall1997new} introduced a new procedure for adding new parameters in common distributions. The authors showed that depending on the values of the new parameter, the new distribution may arrive from the minimum or maximum, where the latent variable follows a geometric distribution and each of components in risk came from a baseline distribution. Further, many special cases of this family were considered, see e.g., Barreto-Souza et al. \cite{barreto2013general} and the references therein.

Another distribution that has been considered as latent variable is the zero truncated Poisson (ZTP) distribution. Kus \cite{kus} derived the exponential-Poisson (EP) distribution by taking the minimum among the lifetimes, where the baseline is the exponential distribution and the latent variable has ZTP distribution. Cancho et al. \cite{cancho} follow the opposite way and considered the maximum, the obtained model there is known as Poisson-exponential (PE) distribution. Lu and Shi \cite{lu2012new} presented the Weibull-Poisson distribution as generalization of EP distribution. Barreto-Souza and Cribari-Neto \cite{barreto2009generalization} discussed another generalized exponential-Poisson distribution by inserting a power parameter in EP distribution. In fact, Tahir and Cordeiro \cite{tahir2016compounding} reviewed more than 20 already introduced distributions based on the zero truncated Poisson distribution.

Although the procedures used to generate the new distributions seems to produce models with different forms, we discussed a unified approach to construct these distributions by extending one of the parameters to the negative space. More importantly this unified approach does not include any additional parameter. For instance, Kus \cite{kus} and Cancho et al. \cite{cancho} distributions can be merged into one only by considering the shape parameter into the positive and negative space. This distribution can be defined as extended exponential-Poisson distribution and its shape parameter has an important interpretation in competitive and complementary risk (CCR) scenarios, representing the lifetime of the minimum (maximum) according to the values of the shape parameter (negative or positive). 
Indeed, CCR problems arise in several areas such as, biomedical studies, reliability and demography. In CCR problems the lifetime associated with a particular risk is not observable; rather, we observe only the minimum or the maximum lifetime of all risks \cite{louzada1999}.

We also noted that positive parameter space has been extended into negative space for the inverse Gaussian and Gompetz distribution earlier (see, for instance,  Balka et al. \cite{balka2009review, balka2011bayesian}). Whitmore \cite{whitmore1979inverse} introduces the term ''defective'' to deal with these types of distributions. In this case, the defected distributions have improper survival functions and can be used to model the cure fraction of patients \cite{rocha2017new}. Here, the term defected is avoided since the obtained distributions have proper survival functions. Further, we discuss the same approach for other especial cases.

The remainder of this paper is organized as follows: Section 2 presents the genesis and some mathematical properties of our proposed family of distributions. Section 3 shows the maximum likelihood (ML) estimators in the presence of censorship and its properties. Section 4 presents the application of our proposed approach in some common distribution. Section 5 illustrates the obtained models to fit an airplane lifetime data. Some final comments are made in Section 6.

\section{Genesis and Properties}

In this section, we discuss the genesis and some properties of the proposed family of distributions.

\subsection{Competitive risks}

 Let $Y_i \ (i=1,2,.\ldots)$ denote the time-to-event due to the $j$-th competitive risks and N be a random variable with a zero-truncated Poisson (ZTP) distribution indexed by parameter $\phi$, hereafter ZTP($\phi$), given by
\begin{equation*}
P(N=n)=\frac{\phi^n}{n!(e^\phi-1)}, \quad n\in\mathbb{N} \ \mbox{and} \ \phi>0.
\end{equation*}

Now, let $X=\min\left\{Y_i\right\}_{i=1}^{N}$, where $X_i$ are independent of N and assumed to be independent and identically distributed according to a uniform distribution in the interval (0,1). The conditional cumulative distribution function (cdf) of $X$ is given by
\begin{equation*}
F(x;N)=1-P(X>x;N)=1-\left(P(Y_1>X \right)^N=1-\left(1-x \right)^N .
\end{equation*}

Thus, the unconditional cdf of T is
\begin{equation*}
F(x;\phi)=\sum_{n=1}^{\infty}\frac{\phi^n}{n!(e^\phi-1)}-\sum_{n=1}^{\infty}\frac{\phi^n}{n!(e^\phi-1)}\left(1-(1-t)^n \right)=\frac{1-e^{-\phi x}}{1-e^{-\phi}} \cdot
\end{equation*}

Substituting $0\leq x\leq 1$ by a generic cdf $F(x;\boldsymbol{\theta})$ we have that the pdf is given by
\begin{equation}\label{cdfcompet}
g(x;\boldsymbol{\theta},\phi)=\frac{\phi}{1-e^{-\phi}}f(x;\boldsymbol{\theta})e^{-\phi F(x;\boldsymbol{\theta})} ,
\end{equation}
where $f(x;\boldsymbol{\theta})$ is the baseline distribution and $F(x;\boldsymbol{\theta})$ is the baseline cumulative function.

\subsection{Complementary risks}

Considering the competitive risk scenario, let $Z_i \ (i=1,2,.\ldots)$ denote the time-to-event due to the $j$-th competitive risks and N follows a ZTP($\lambda$) distribution. Now, let $W=\max\left\{Y_i\right\}_{i=1}^{N}$ where $X_i$ are independent of N and assumed to be independent and identically distributed according to an uniform distribution in the interval (0,1). The conditional probability density function (pdf) of $X$ given $N$ is given by
\begin{equation*}
f(w;N=n)=nw^{n-1}, w>0 \ n=1,2,\ldots
\end{equation*}

Thus, the unconditional p.d.f. of T is given by
\begin{equation*}
\begin{aligned}
f(t;\lambda)&=\sum_{n=1}^{\infty}nt^{n-1} \frac{\lambda^ne^{-\lambda}}{n!(1-e^{-\lambda})}=\frac{\lambda e^{-\lambda}}{(1-e^{-\lambda})}\sum_{n=1}^{\infty} \frac{\left(\lambda t\right)^{n-1}}{(n-1)!} =\frac{\lambda e^{-\lambda-\lambda t}}{(1-e^{-\lambda})} \cdot
\end{aligned}
\end{equation*}

Substituting $0\leq t\leq 1$ by a generic cdf $F(t;\boldsymbol{\theta})$ we have
\begin{equation}\label{cdfcomplemen}
g(t;\boldsymbol{\theta},\lambda)=\frac{\lambda}{1-e^{-\lambda}}f(t;\boldsymbol{\theta})e^{-\lambda\bar{F}(t;\boldsymbol{\theta})},
\end{equation}
where $f(t;\boldsymbol{\theta})$ is the pdf related to baseline distribution.

\subsection{A unified approach}

Note that, although (\ref{cdfcompet}) seems to differ from (\ref{cdfcomplemen}), if we assume that $\lambda$ takes negative values, i.e., $\lambda=-\phi$, 
from (\ref{cdfcomplemen}) we have
\begin{equation}\label{ppdfmax}
f(t;\boldsymbol{\theta},\phi)= -\frac{\phi}{1-e^{\phi}}f(t;\boldsymbol{\theta})e^{\phi(1- F(t;\boldsymbol{\theta}))}=\frac{\phi}{1-e^{-\phi}}f(t;\boldsymbol{\theta})e^{-\phi F(t;\boldsymbol{\theta})},
\end{equation}
which is the same family of distributions presented in (\ref{cdfcompet}) but considering $\phi<0$. The relation in (\ref{ppdfmax}) holds since
\begin{equation*}
-\frac{\phi}{1-e^{\phi}}=\frac{\phi}{e^\phi-1}=\frac{\phi}{e^\phi-1}\frac{e^{-\phi}}{e^{-\phi}}=\frac{\phi e^{-\phi}}{1-e^{-\phi}} \cdot
\end{equation*}
Hence, both distributions can be unified in a simple form. Let $\mathbb R^*=\mathbb R/\{0\}$, if X has an extended Poisson-family of distributions then its cumulative distribution
function is given by
\begin{equation}\label{cumpe}
G(t;\boldsymbol{\theta},\lambda)=\frac{e^{-\lambda(1- F(t;\boldsymbol{\theta}))}-e^{-\lambda
}}{1-e^{-\lambda}}=\frac{e^{\lambda F(t;\boldsymbol{\theta})}-1}{e^{\lambda}-1},
\end{equation}
for all $t>0$, where $\lambda\in \mathbb R^*$ is a shape and $\boldsymbol{\theta}$ is a vector of parameters related to the parametric baseline distribution. Additionally, the survival function $\bar{G}(\cdot)$ and the hazard function $h(\cdot)$ are given, respectively, by
\begin{equation}\label{sumpe}
\bar{G}(t;\boldsymbol{\theta},\lambda)=\frac{1-e^{-\lambda \bar{F}(t;\boldsymbol{\theta})}}{1-e^{-\lambda}}, \quad \mbox{and} \quad h(t;\boldsymbol{\theta},\lambda)=\frac{\lambda f(t;\boldsymbol{\theta})}{e^{\lambda \bar{F}(t;\boldsymbol{\theta})}-1} \cdot
\end{equation}

The shape parameter of our class of models has an important interpretation in the competing and complementary risks scenario, e.g., under the above assumptions if $\lambda<0$ ($\lambda>0$) then T represents the lifetime of the minimum (maximum) of $Y_i$. Moreover, as $\lambda$  tends to zero, the new family of distribution converges to the baseline distribution (random). This family of distribution has an interesting property related to the quantile function, if the quantile function of the baseline distribution has closed-form expression the quantile function related to the composed distribution has also closed-form expression.

\begin{proposition}\label{propclo1} Let $F^{-1}(p;\boldsymbol{\theta})$ be the quantile function of $F(t;\boldsymbol{\theta})$, if $F^{-1}(p;\boldsymbol{\theta})$ has closed-form then $G^{-1}(p;\boldsymbol{\theta},\lambda)$ also has closed form and $G^{-1}(p;\boldsymbol{\theta},\lambda) = F^{-1}(\f{q}(p,\lambda);\boldsymbol{\theta})$ where
\begin{equation}\label{qplambda}
\f{q}(p,\lambda) = \frac{\log\left((e^{\lambda} - 1)p + 1\right)}{\lambda}.
\end{equation}
\end{proposition}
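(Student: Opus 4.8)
The plan is to compute the quantile function of $G$ directly by inverting the closed-form expression in (\ref{cumpe}), and then to verify that the resulting intermediate value lies in the domain on which the baseline quantile function is defined.

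First I would recall that the quantile function $G^{-1}(p;\boldsymbol{\theta},\lambda)$ is obtained by solving $G(t;\boldsymbol{\theta},\lambda)=p$ for $t$, for each $p\in(0,1)$. Using the right-hand expression in (\ref{cumpe}), this amounts to the equation
\begin{equation*}
\frac{e^{\lambda F(t;\boldsymbol{\theta})}-1}{e^{\lambda}-1}=p.
\end{equation*}
Next I would solve this for $F(t;\boldsymbol{\theta})$ by elementary algebra: clearing the denominator, isolating the exponential to get $e^{\lambda F(t;\boldsymbol{\theta})}=(e^{\lambda}-1)p+1$, and taking logarithms yields $F(t;\boldsymbol{\theta})=\f{q}(p,\lambda)$ with $\f{q}$ as in (\ref{qplambda}). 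Applying the baseline quantile function $F^{-1}(\cdot;\boldsymbol{\theta})$ to both sides then gives $t=F^{-1}(\f{q}(p,\lambda);\boldsymbol{\theta})$, which is the claimed identity. Since $\f{q}(p,\lambda)$ is an explicit expression built from exponentials and a logarithm, and $F^{-1}$ is assumed to be in closed form, $G^{-1}$ is in closed form as well.

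The step requiring the most care — and really the only genuine obstacle — is confirming that $\f{q}(p,\lambda)$ indeed lies in $(0,1)$, so that the composition with $F^{-1}$ is well defined for every admissible $p$ and for both signs of $\lambda\in\mathbb{R}^{*}$. I would treat the cases $\lambda>0$ and $\lambda<0$ separately: in each case the argument $(e^{\lambda}-1)p+1$ of the logarithm is linear in $p$ and takes the values $1$ at $p=0$ and $e^{\lambda}$ at $p=1$, both positive, hence is positive throughout $(0,1)$, so the logarithm is defined. A short monotonicity check then shows $\f{q}(0,\lambda)=0$ and $\f{q}(1,\lambda)=1$, giving $\f{q}(p,\lambda)\in(0,1)$. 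This validates the composition uniformly in the sign of $\lambda$ and completes the argument.
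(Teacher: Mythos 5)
Your proposal is correct and follows essentially the same route as the paper: solve $G(t;\boldsymbol{\theta},\lambda)=p$ for $F(t;\boldsymbol{\theta})$ by clearing the denominator, isolating the exponential, taking logarithms, and applying $F^{-1}$. The additional verification that $\f{q}(p,\lambda)\in(0,1)$ for both signs of $\lambda$ is a sensible extra check that the paper omits, but it does not change the nature of the argument.
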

\begin{proof} We have that $G(t;\boldsymbol{\theta},\lambda)=p\Leftrightarrow G^{-1}(p;\boldsymbol{\theta},\lambda) = t$ by definition. But
\begin{equation*}
\begin{aligned}
G(t;\boldsymbol{\theta};&\lambda) = p \Leftrightarrow
\frac{e^{\lambda F(t;\boldsymbol{\theta})}-1}{e^{\lambda}-1} = p \Leftrightarrow 
e^{\lambda F(t;\boldsymbol{\theta})} = (e^{\lambda}-1)p + 1 \Leftrightarrow \\& F(t;\boldsymbol{\theta}) = \frac{\log\left((e^{\lambda} - 1)p + 1\right)}{\lambda} \Leftrightarrow t = F^{-1}(\f{q}(p,\lambda);\boldsymbol{\theta}),
\end{aligned}
\end{equation*}
where $\f{q}(p,\lambda)$ is given by (\ref{qplambda}). Therefore $G^{-1}(p;\boldsymbol{\theta},\lambda) = F^{-1}(\f{q}(p,\lambda);\boldsymbol{\theta})$.
\end{proof}

\subsection{Presence of instantaneous failures}

When data to be modeled has the presence of instantaneous failures (inliers), standard distributions may not be suitable. For instance, in lifetime testing of electronic devices the occurrence of fail at time 0 may be observed due to inferior quality or construction problem. Another example is in weather forecasts where the occurrence of dry periods without the presence of precipitation is very common, standard models such as Gamma, Weibull, Lognormal cannot be used. Although for our family of distribution the parametric baseline distributions are greater than zero, some of the compound models may allow the occurrence of zero value. 
\begin{proposition}
Let $h_F(t;\boldsymbol{\theta})$ be the baseline hazard function related to the cdf $F(t;\boldsymbol{\theta})$, then if $0<h_F(0;\boldsymbol{\theta})<\infty$, we have $g(0;\boldsymbol{\theta},\lambda)> 0$.
\end{proposition}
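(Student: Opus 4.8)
The plan is to compute $g(0;\boldsymbol{\theta},\lambda)$ in closed form and verify that it is strictly positive. First I would obtain the density of the compound model by differentiating the cumulative distribution function in (\ref{cumpe}) with respect to $t$, which yields
\[
g(t;\boldsymbol{\theta},\lambda)=\frac{\lambda f(t;\boldsymbol{\theta})\,e^{\lambda F(t;\boldsymbol{\theta})}}{e^{\lambda}-1}.
\]
This is the object whose value at the origin we must control.

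Next I would translate the hypothesis, which is phrased in terms of the baseline hazard rather than the baseline density, into a statement about $f$. Since the parametric baseline distributions are supported on the positive half-line we have $F(0;\boldsymbol{\theta})=0$ and hence $\bar{F}(0;\boldsymbol{\theta})=1$, so that the baseline hazard satisfies $h_F(0;\boldsymbol{\theta})=f(0;\boldsymbol{\theta})/\bar{F}(0;\boldsymbol{\theta})=f(0;\boldsymbol{\theta})$. The assumption $0<h_F(0;\boldsymbol{\theta})<\infty$ therefore reads $0<f(0;\boldsymbol{\theta})<\infty$. Evaluating the displayed density at $t=0$ and using $F(0;\boldsymbol{\theta})=0$ collapses the exponential factor to $1$, leaving
\[
g(0;\boldsymbol{\theta},\lambda)=\frac{\lambda f(0;\boldsymbol{\theta})}{e^{\lambda}-1}.
\]
Because $f(0;\boldsymbol{\theta})>0$, the sign of $g(0;\boldsymbol{\theta},\lambda)$ is governed entirely by the factor $\lambda/(e^{\lambda}-1)$.

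The crux is then to show that $\lambda/(e^{\lambda}-1)>0$ for every $\lambda\in\mathbb{R}^{*}$, and this is precisely where the unified (competitive/complementary) nature of the family enters, since one must handle both signs of the shape parameter. I would split into the two regimes: for $\lambda>0$ both numerator and denominator are positive; for $\lambda<0$ we have $e^{\lambda}<1$, so the numerator and the denominator are both negative, and the ratio is again positive. Combining the two cases gives $g(0;\boldsymbol{\theta},\lambda)>0$, as claimed.

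I do not anticipate a genuine obstacle in this argument, since it reduces to an elementary sign analysis once the density is written down. The only steps that require care are the conversion from the hazard hypothesis to a density statement (namely, justifying $F(0;\boldsymbol{\theta})=0$ so that $h_F(0;\boldsymbol{\theta})=f(0;\boldsymbol{\theta})$) and the simultaneous verification of positivity for both $\lambda>0$ and $\lambda<0$; overlooking the negative regime would leave the complementary-risk case of the family unaddressed.
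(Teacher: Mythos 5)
Your proposal is correct and follows essentially the same route as the paper: both reduce the claim to the identity $g(0;\boldsymbol{\theta},\lambda)=\lambda\,(e^{\lambda}-1)^{-1}h_F(0;\boldsymbol{\theta})$ (you write $f(0;\boldsymbol{\theta})$ in place of $h_F(0;\boldsymbol{\theta})$, which coincide since $\bar{F}(0;\boldsymbol{\theta})=1$) and then to the positivity of $\lambda/(e^{\lambda}-1)$ on $\mathbb{R}^{*}$. The only cosmetic differences are that the paper evaluates the density via the limit $t\to 0$ using $f=h_F\bar{F}$ while you evaluate directly at $t=0$, and that you spell out the two-sign case analysis that the paper merely asserts.
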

\begin{proof}
Note that
\begin{equation}\label{zeroprof}
\begin{aligned}
\lim_{t \to 0} g(t;\boldsymbol{\theta},\lambda)&= \lim_{t \to 0} \frac{\lambda}{1-e^{-\lambda}}f(t;\boldsymbol{\theta})e^{-\lambda(1- F(t;\boldsymbol{\theta}))}\\ & = \lim_{t \to 0} \frac{\lambda}{1-e^{-\lambda}}h_F(t;\boldsymbol{\theta})\bar{F}(t;\boldsymbol{\theta})e^{-\lambda(\bar{F}(t;\boldsymbol{\theta}))}
\\ & = \frac{\lambda}{e^{\lambda}-1}h_F(0;\boldsymbol{\theta}).
\end{aligned}
\end{equation}

Since $\lambda(e^{\lambda}-1)^{-1}>0, \forall\, \lambda\in\mathbb R^*$, then for $0<h_F(0;\boldsymbol{\theta})<\infty$,  we have  $g(0;\boldsymbol{\theta},\lambda)> 0$.

\end{proof}

Therefore, depending on the behavior of the baseline hazard function, some of the resulting models will be capable to accommodate data with zero value.

\section{Inference}

In statistical inference, different procedures can be considered in order to obtain the parameter estimates of particular distributions \cite{rodrigues2016poisson}. The ML estimators are usually considered due to its attractive limiting properties such as consistency, asymptotic normality and efficiency \cite{migon2014}. 

Let $T_i$ be the lifetime of $i$th  component with censoring time $C_i$, which are assumed to be independent of
$T_i$s and its distribution does not depend on the parameters, the data set is represented by $\mathcal{D}=(t_i,\delta_i)$, where $t_i=\min(T_i,C_i)$ and $\delta_i=I(T_i\leq C_i)$. The random censoring scheme has as special cases the type I and II censoring mechanism. The likelihood function for $\boldsymbol\theta$ is given by
\begin{equation*} 
\begin{aligned}
L(\boldsymbol{\theta},\lambda;\boldsymbol{\mathcal{D}})&=\prod_{i=1}^n g(t_i|\boldsymbol{\theta},\lambda)^{\delta_i}\bar{G}(t_i|\boldsymbol{\theta},\lambda)^{1-\delta_i} \\ &= \left(\frac{\lambda}{1-e^{-\lambda}} \right)^n \prod_{i=1}^n f(t_i|\boldsymbol{\theta})^{\delta_i}e^{-\lambda\delta_i\bar{F}(t_i|\boldsymbol{\theta})}\left(\frac{1- e^{-\lambda \bar{F}(t_i|\boldsymbol{\theta})}}{\lambda}\right)^{1-\delta_i}.
\end{aligned}
\end{equation*}

The log-likelihood function is given by
\begin{equation*} 
\begin{aligned}
l(\boldsymbol{\theta},\lambda;\boldsymbol{\mathcal{D}})=& \,n\log\left(\frac{\lambda}{1-e^{-\lambda}} \right)+ \sum_{i=1}^n\delta_i\log f(t_i|\boldsymbol{\theta}) + \sum_{i=1}^n(1-\delta_i)\log\left(\frac{1- e^{-\lambda \bar{F}(t_i|\boldsymbol{\theta})}}{\lambda}\right) \\& -\lambda\sum_{i=1}^{n}\delta_i\bar{F}(t_i|\boldsymbol{\theta}).
\end{aligned}
\end{equation*}
 
Under the assumption that the likelihood function is differentiable at $\boldsymbol{\theta}$ and $\lambda$. From the partial derivatives of the the log-likelihood function, the likelihood equations are
\begin{equation*} 
\frac{\partial l(\boldsymbol{\theta},\lambda;\boldsymbol{\mathcal{D}})}{\partial \lambda}=\frac{n}{\lambda}+\frac{n}{1-e^{\lambda}} + \sum_{i=1}^n\frac{(1-\delta_i)\left(\lambda\bar{F}(t_i|\boldsymbol{\theta})-e^{\lambda \bar{F}(t_i|\boldsymbol{\theta})}+1\right)}{\lambda\left(e^{\lambda \bar{F}(t_i|\boldsymbol{\theta})}-1\right)} -\sum_{i=1}^{n}\delta_i\bar{F}(t_i|\boldsymbol{\theta}),
\end{equation*}
\begin{equation*} 
\frac{\partial l(\boldsymbol{\theta},\lambda;\boldsymbol{\mathcal{D}})}{\partial \boldsymbol{\theta}}= \sum_{i=1}^n\delta_i  \frac{\partial \log f(t_i|\boldsymbol{\theta})}{\partial \boldsymbol{\theta}} + \sum_{i=1}^n\frac{\lambda(1-\delta_i)}{e^{\lambda \bar{F}(t_i|\boldsymbol{\theta})}-1}\frac{\partial \bar{F}(t_i|\boldsymbol{\theta})}{\partial \boldsymbol{\theta}} -\lambda\sum_{i=1}^{n}\delta_i \frac{\partial \bar{F}(t_i|\boldsymbol{\theta})}{\partial \boldsymbol{\theta}}.
\end{equation*}
Setting the partial derivatives equal to zero, the solutions provide the ML estimates. In many cases, numerical methods such as Newton-Rapshon are required to find the solution of these nonlinear systems.

Under mild conditions the ML estimators of $\boldsymbol{\theta},\lambda$ have an asymptotically Normal joint distribution given by
\begin{equation} (\boldsymbol{\hat{\theta}},\lambda) \sim N[(\boldsymbol{\theta},\lambda),H^{-1}(\boldsymbol{\theta},\lambda)], \mbox{ as } n \to \infty , \end{equation}
where $H(\boldsymbol{\theta},\lambda)$ is the observed information matrix where the elements are given by
\begin{equation*}
\begin{aligned} 
H_{\lambda,\lambda}(\boldsymbol{\theta},\lambda)=&\frac{n}{\lambda^2}-\frac{n e^\lambda}{\left(e^{\lambda}-1\right)^2} + \sum_{i=1}^n(1-\delta_i)\frac{\bar{F}(t_i|\boldsymbol{\theta})e^{\lambda \bar{F}(t_i|\boldsymbol{\theta})}}{\lambda^2\left(e^{\lambda \bar{F}(t_i|\boldsymbol{\theta})}-1\right)^2}- \sum_{i=1}^n\frac{(1-\delta_i) }{t_i^2}\\ & +\sum_{i=1}^{n}\delta_i\bar{F}(t_i|\boldsymbol{\theta}),
\end{aligned}
\end{equation*}
\begin{equation*} 
H_{\lambda,\boldsymbol{\theta}}(\boldsymbol{\theta},\lambda)=- \sum_{i=1}^{n}\delta_i \frac{\partial \bar{F}(t_i|\boldsymbol{\theta})}{\partial \boldsymbol{\theta}}+\sum_{i=1}^n\frac{(1-\delta_i)\left(\lambda \bar{F}(t_i|\boldsymbol{\theta})e^{\lambda \bar{F}(t_i|\boldsymbol{\theta})}-e^{\lambda \bar{F}(t_i|\boldsymbol{\theta})}+1\right)}{\left(e^{\lambda \bar{F}(t_i|\boldsymbol{\theta})}-1\right)^2}\frac{\partial \bar{F}(t_i|\boldsymbol{\theta})}{\partial \boldsymbol{\theta}} ,
\end{equation*}
\begin{equation*} 
\begin{aligned}
H_{\boldsymbol{\theta},\boldsymbol{\theta}}(\boldsymbol{\theta},\lambda)=& \sum_{i=1}^n\frac{\lambda(1-\delta_i)}{\left(e^{\lambda \bar{F}(t_i|\boldsymbol{\theta})}-1\right)^2}\left(\lambda e^{\lambda \bar{F}(t_i|\boldsymbol{\theta})}\left(\frac{\partial \bar{F}(t_i|\boldsymbol{\theta})}{\partial \boldsymbol{\theta}}\right)^2 -\left(e^{\lambda \bar{F}(t_i|\boldsymbol{\theta})}-1 \right)\frac{\partial^2 \bar{F}(t_i|\boldsymbol{\theta})}{\partial^2 \boldsymbol{\theta}} \right) \\ & -\sum_{i=1}^n\delta_i  \frac{\partial^2 \log f(t_i|\boldsymbol{\theta})}{\partial^2 \boldsymbol{\theta}}+\lambda\sum_{i=1}^{n}\delta_i \frac{\partial^2 \bar{F}(t_i|\boldsymbol{\theta})}{\partial^2 \boldsymbol{\theta}} .
\end{aligned}
\end{equation*}

\section{Application}

In this section, we applied our proposed methodology for some common distributions.

\subsection{Exponential distribution}\label{subeep}

Let $Z_1,\ldots,Z_N$ be a non-negative random sample with an exponential distribution where its cdf is given by $F(z;\beta)=1- e^{-\beta z}$, $\beta>0$. Then, using (\ref{cumpe}) it follows that 
\begin{equation}\label{fdppexp} 
g(t;\lambda,\beta)= \frac{\lambda \beta e^{-\beta t-\lambda e^{-\beta t}}}{1 - e^{-\lambda}} ,
\end{equation}
where $\lambda\in \mathbb R^*$ is the shape parameter. Although the p.d.f. has the same form as presented by Cancho et al. \cite{cancho}, by extending the shape parameter into $\mathbb R^*$ we unify the PE distribution with the EP distribution \cite{kus} without adding an additional parameter. More importantly, the shape parameter of this extended exponential Poisson (EEP) distribution has a biological interpretation in terms of CCR problems, i.e., if $\lambda<0$ $(\lambda>0)$ the activation mechanism is the minimum (maximum). Adamidis et al. \cite{adamidis2005extension} discussed a similar interpretation for the shape parameter of the extended exponential geometric distribution which is a unification of the exponential geometric distribution (minimum) \cite{adamidis2} and the complementary exponential geometric distribution (maximum) \cite{louzada2}.

The hazard function of EPE distribution is $h(t)=\lambda \beta e^{-\beta t -\lambda e^{-\beta t}}\left(1-e^{-\lambda e^{-\beta t}}\right)^{-1}$.
Figure \ref{fhazard} gives examples of different shapes for the hazard function.
\begin{figure}[!htb]
	\centering
	\includegraphics[scale=0.5]{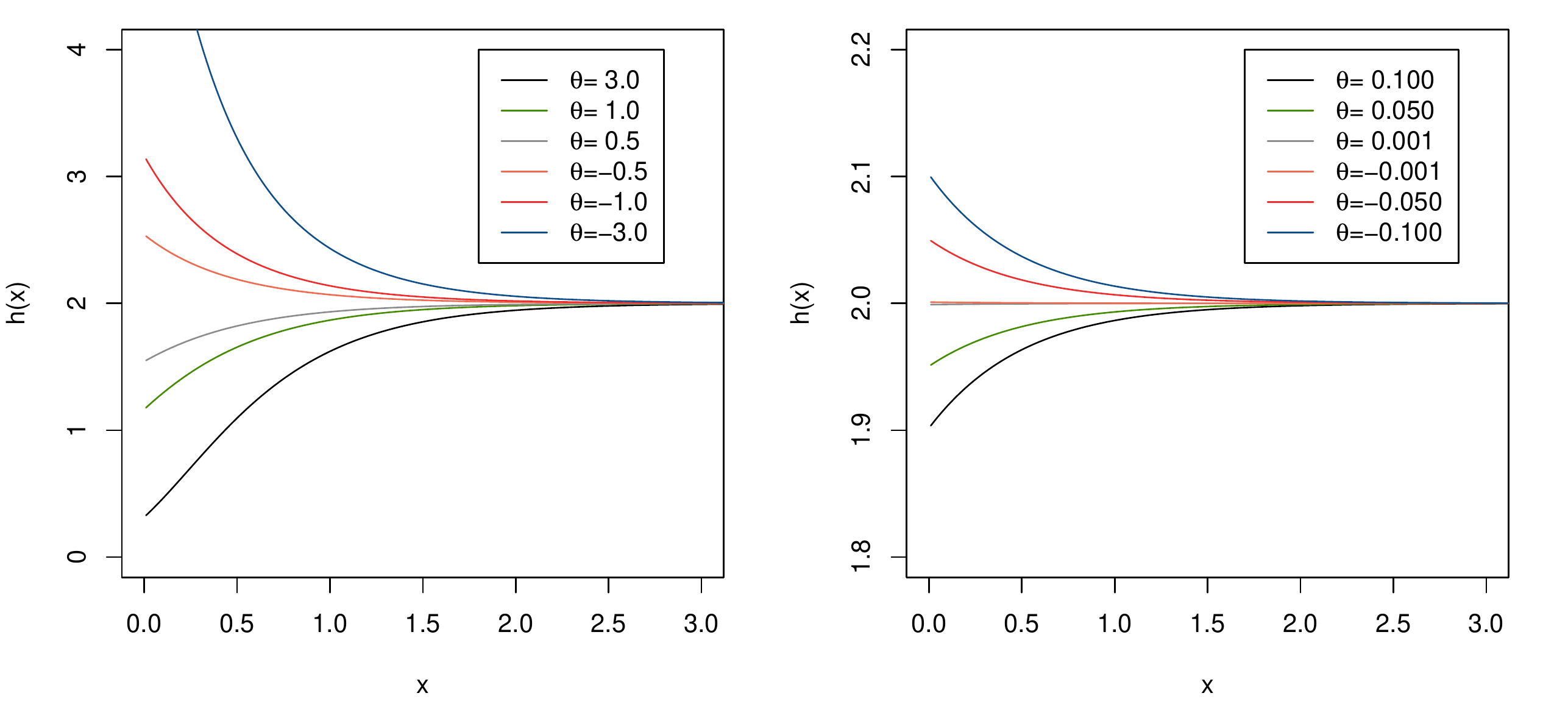}
	\caption{Hazard function shapes for EPE distribution considering different values of $\lambda$ and $\beta=2$.}\label{fhazard}
\end{figure}

Since the exponential distribution has quantile function in closed-form, then using the Proposition \ref{propclo1} the quantile function of the EEP distribution is given by $F^{-1}\left(\f{q}(p,\lambda),\beta\right)=-\beta^{-1}\log\left(1-\f{q}(p,\lambda)\right)$, where $\f{q}(p,\lambda)$ is given in (\ref{qplambda}). Additionally the hazard function of the exponential distribution is given by $h_F(0,\beta)=\beta$ and $0<h_F(0,\beta)<\infty$. Then,  we have $g(0;\lambda,\beta)= \beta\lambda\left(1 - e^{-\lambda}\right)^{-1}\geq 0$, i.e., the EPE distribution also allow the occurrence of zero value.


The maximum likelihood estimators were discussed earlier by  Kus \cite{kus} and Cancho et al. \cite{cancho}. Let $t_1, \ldots, t_n$ be a random sample of size $n$ from EPE distribution, the likelihood function is given by 
\begin{equation}\label{likehoodf1}
L(\lambda, \beta\,;
\boldsymbol{t})=\Bigl(\frac{\lambda \beta}{1-e^{-\lambda}}\Bigl)^n \exp\left\{-\beta \sum_{i=1}^{n}t_i - \lambda \sum_{i=1}^{n}e^{-\beta t_i}\right\} . 
\end{equation}

Cancho et al. \cite{cancho} presented the following log-likelihood function 
\begin{equation*}
\ell(\lambda,\beta; \boldsymbol{t})= n \log(\lambda \beta)-\beta \sum_{i=1}^{n}t_i - \lambda \sum_{i=1}^{n}e^{-\beta t_i}-n \log(1-e^{-\lambda}).
\end{equation*}

Some careful must be taken with the EEG distribution, for instance, $\lambda$ can take negative values then $n\log(\lambda)$ may not be computed. This problem is easily overcome by considering the fact that
\begin{equation*}
\left(\frac{\lambda}{1-e^{-\lambda}} \right)^n>0 \ \ \Leftrightarrow\ \ n\log\left(\frac{\lambda}{1-e^{-\lambda}}\right)\in\mathbb{R}\,,  \ \ \forall \  \lambda\in\mathbb{R}^*.
\end{equation*}

Therefore the log-likelihood function of (\ref{likehoodf1}) is given as
\begin{equation}\label{llikehoodf}
\ell(\lambda,\beta; \boldsymbol{t})= n \log\left(\frac{\lambda}{1-e^{-\lambda}}\right) +n\log(\beta)-\beta \sum_{i=1}^{n}t_i - \lambda \sum_{i=1}^{n}e^{-\beta t_i}.
\end{equation}

\subsection{Weibull distribution}

Now, let $Z_1,\ldots,Z_N$ be a non-negative random sample with cdf given by $F(z;\alpha,\beta)=1- e^{-\beta t^\alpha}$ where $\beta>0$ and $\alpha>0$ . From (\ref{cumpe}) we have 
\begin{equation}\label{fdpweib} 
g(t;\lambda,\alpha,\beta)= \frac{\alpha\lambda\beta  t^{\alpha-1} e^{-\beta t^\alpha-\lambda e^{-\beta t^\alpha}}}{1 - e^{-\lambda}},
\end{equation}
where $\lambda\in\mathbb R^*$. Hemmati et al. \cite{hemmati2011new} discussed a particular case of (\ref{fdpweib}) when $\lambda<0$ (minimum) and named as Weibull-Poisson (WP) distribution. It is worth mentioning that Lu and Shi \cite{lu2012new} independently developed the same distribution and named as WP distribution. Since, our new model also includes the maximum activation mechanism we could be named as extended Weibull-Poisson (EWP) distribution. The hazard function of EWP distribution is $h(t;\lambda,\alpha,\beta)=\lambda \beta t^{\alpha-1}e^{-\beta t^\alpha -\lambda e^{-\beta t^\alpha}}\left(1-e^{-\lambda e^{-\beta t^\alpha}}\right)^{-1}$.
Figure \ref{fhazard-weibull} gives examples of different shapes for the hazard function.
\begin{figure}[!htb]
	\centering
	\includegraphics[scale=0.5]{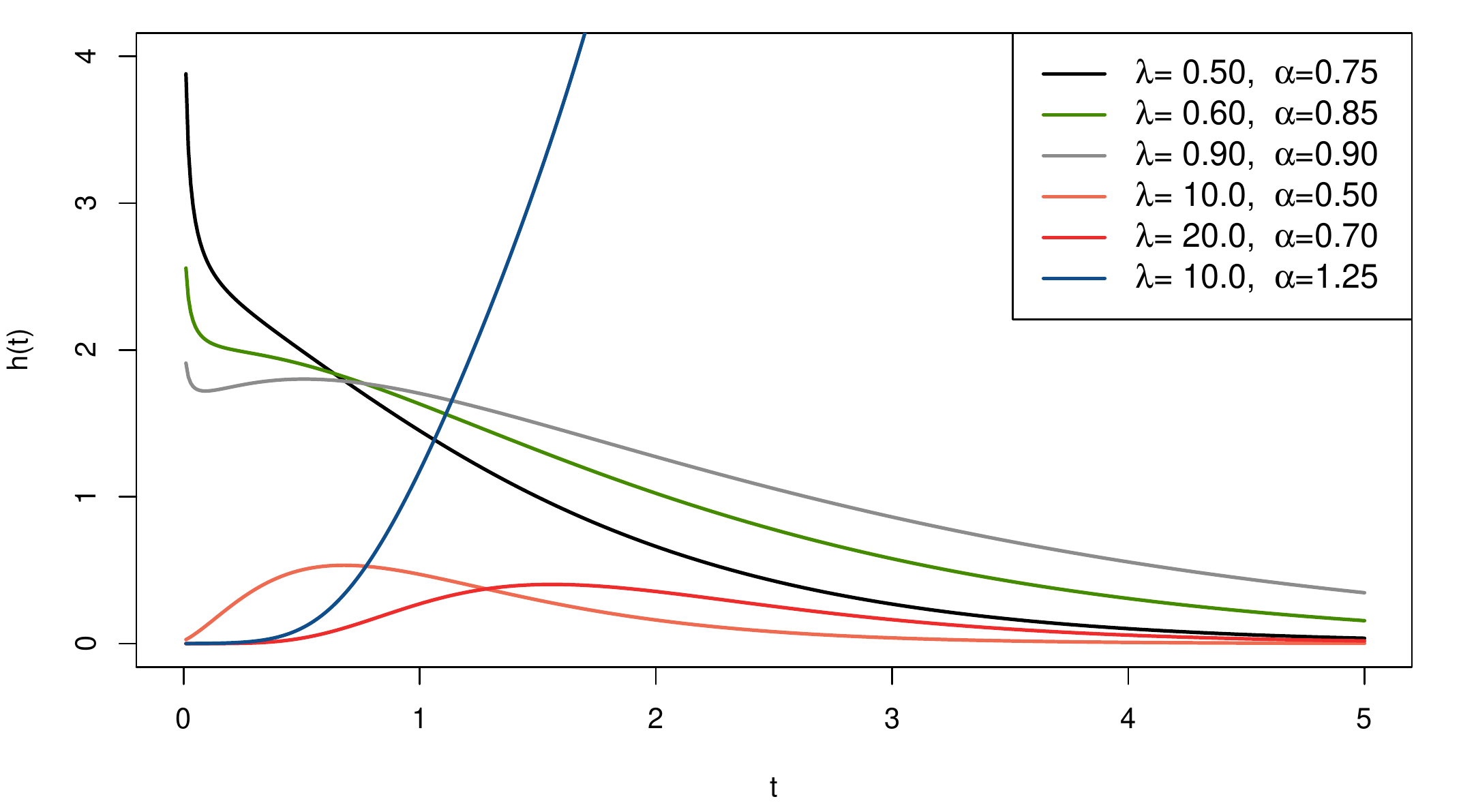}
	\caption{Hazard function shapes for EWP distribution for $\beta=2$  and considering different values of $\lambda$ and $\alpha$.}\label{fhazard-weibull}
\end{figure}

Although Lu and Shi \cite{lu2012new} had shown that WP distribution has increasing and decreasing hazard rate, the extended version also has decreasing-increasing-decreasing and unimodal hazard shape without adding an extra parameter. The Weibull distribution has quantile function in closed-form then using the Proposition \ref{propclo1} the quantile function of the EWP distribution is given by $F^{-1}\left(\f{q}(p,\lambda),\beta,\alpha\right)=\left(-\log\left(1-\f{q}(p,\lambda)\right)/\beta\right)^{1/\alpha}$. Additionally, the hazard function of Weibull distribution is $h_F(t,\alpha,\beta)=\alpha\beta t^{\alpha-1}$ and $0<h_F(0,\alpha,\beta)<\infty$ if and only if $\alpha=1$. Then, the EWP distribution only allow the occurrence of zero value when its reduces to the EEP distribution.

\subsection{Exponentiated exponential-Poisson distribution}

A generalization of exponential-Poisson distribution was proposed by Barreto-Souza and Cribari-Neto \cite{barreto2009generalization} known as generalized exponential-Poisson (GEP) distribution. A random variable T with GEP has the p.d.f. given by
\begin{equation}\label{cumgpe}
f(t;\phi ,\beta,\alpha)=\frac{\alpha\phi\beta}{\left(1-e^{-\phi}\right)^\alpha}\left(1-e^{-\phi+\phi e^{-\beta t}}\right)^{\alpha-1}e^{-\phi-\beta t+\phi e^{-\beta t}},
\end{equation}
where $t, \phi,\beta,\alpha >0$. Note that including a power parameter in (\ref{cumpe}), we have
\begin{equation}\label{cumpepower}
F(t;\boldsymbol{\theta},\lambda,\alpha)=\left(\frac{e^{-\lambda(1- F(t;\boldsymbol{\theta}))}-e^{-\lambda
}}{1-e^{-\lambda}}\right)^\alpha,
\end{equation}
where $\alpha>0$. Following the same procedure described in Section \ref{subeep}, a generalized extended exponential-Poisson (GE2P) has p.d.f. given by
\begin{equation}\label{cumgpe3}
g(t;\beta,\lambda,\alpha)=\frac{\alpha\lambda\beta}{1-e^{-\lambda}}\left(\frac{e^{-\lambda e^{-\beta t}}-e^{-\lambda}}{1-e^{-\lambda}}\right)^{\alpha-1}e^{-\beta t-\lambda e^{-\beta t}},
\end{equation}
where $t,\beta,\alpha >0$ and $\lambda\in \mathbb R^*$. The p.d.f. (\ref{cumgpe3}) is the same as (\ref{cumpepower}) when $\lambda<0$. The current form of the p.d.f. (\ref{cumgpe3}) is important since $1/(1-e^{-\lambda})^\alpha$ may not be defined when $\lambda<0$. The survival function of the GE2P distribution is given by
\begin{equation*}
\bar{G}(t;\beta ,\lambda,\alpha)=1-\left(\frac{e^{-\lambda e^{-\beta t}}-e^{-\lambda}}{1-e^{-\lambda}}\right)^\alpha .
\end{equation*}

The hazard function is obtained from $h(t,\lambda ,\beta,\alpha)=g(t,\beta,\lambda ,\alpha)/\bar{G}(t,\beta,\lambda,\alpha)$. Barreto-Souza and Cribari-Neto \cite{barreto2009generalization} proved that the hazard function has decreasing, 
increasing  or unimodal shape (for $\lambda<0$). Although we have the same number of parameters, when $\lambda>0$ the hazard function of the GE2P can have bathtub shape (see Figure \ref{fhazard-bar-crib}).

\begin{figure}[!htb]
	\centering
	\includegraphics[scale=0.5]{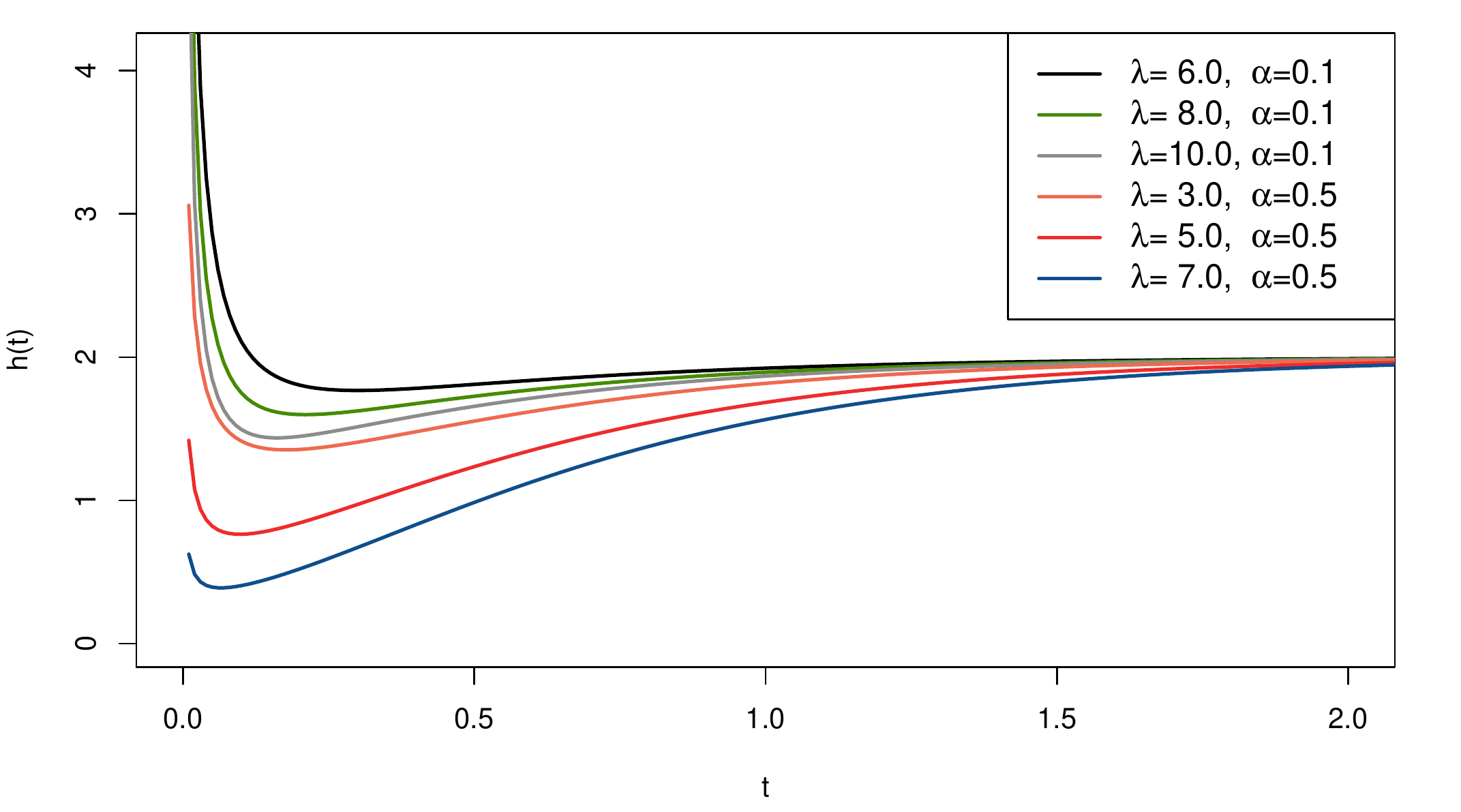}
	\caption{Hazard function shapes for GE2P distribution for $\beta=2$  and considering different values of $\lambda$ and $\alpha$.}\label{fhazard-bar-crib}
\end{figure}
Therefore, this simple extension of the GEP distribution has the hazard function with decreasing, 
increasing, bathtub or unimodal shape. Note that, the EGEP distribution is a double compounded distribution in which we applied firstly our approach in the exponential distribution and secondly the Lehmann \cite{lehmann1953power} approach.  Therefore, since EPE distribution allow the occurrence of zero value the EGEP distribution has the same property. The quantile function of the EGEP distribution is given by $F^{-1}\left(\f{q}(p,\lambda),\beta,\alpha\right)=-\beta^{-1}\log\left(1-\f{q}(p,\lambda)^{1/\alpha}\right)$ where $\f{q}(p,\lambda)$ is given in (\ref{qplambda}).

\subsection{Generalized extreme value distribution}

The Generalized extreme value (GEV) distribution plays an important role in extreme value theory for modeling rare events. The GEV distribution \cite{jenkinson} has as special cases the Gumbel, Fr\'echet and Weibull distribution, its cdf is given by 
\begin{equation}\label{gev}
F(t|\sigma,\mu,\xi)=
\begin{cases}
\exp\left\{ -\left[1+\xi(t-\mu)/\sigma\right]_+^{-1/\xi}\right\}, & \xi\neq 0 \\
\exp\left\{ -\exp\left[-(t-\mu)/\sigma\right]\right\}, & \xi= 0,
\end{cases}
\end{equation}
where $\sigma>0$, $\mu, \xi \in \mathbb{R}$ and $t_+=\max(t,0)$. The  extended  generalized extreme value Poisson (EGEVP) distribution has the cdf given by
\begin{equation}\label{cumpgev}
G(t;\lambda,\sigma,\mu,\xi)=\frac{e^{\lambda e^{-r(t;\sigma,\mu,\xi)}}-1}{e^{\lambda}-1},
\end{equation}
where
\vspace{-0.6cm}
\begin{equation*}
r(t;\sigma,\mu,\xi)=
\begin{cases}
\left[1+\xi(t-\mu)/\sigma\right]_+^{-1/\xi}, & \xi\neq 0  \\
\exp\left[-(t-\mu)/\sigma\right], & \xi= 0.
\end{cases}
\end{equation*}

The pdf of the EGEVP distribution has the following form
\begin{equation}\label{cdfgev}
g(t;\lambda,\sigma,\mu,\xi)=\frac{\lambda\,r(t;\sigma,\mu,\xi)^{\xi+1}}{\left(1-e^{-\lambda}\right)\sigma}\exp\left\{-\lambda(1-  e^{-r(t;\sigma,\mu,\xi)}))-r(t;\sigma,\mu,\xi)\right\}.
\end{equation}

The EGEVP distribution (\ref{cdfgev}) has as special cases the EEP distribution, EWP distribution, extended Gumbel-Poisson distribution, extended Fr\'echet-Poisson distribution, to list a few. 
The quantile function has closed-form and is given by
\begin{equation}\label{gev}
F^{-1}\left(\f{q}(p,\lambda),\sigma,\mu,\xi\right)=
\begin{cases}
\mu +\frac{\sigma}{\xi}\left(\left[-\log(\f{q}(p,\lambda))\right]^{-\xi}-1 \right), & \xi\neq 0 \\
\mu -\sigma\log\left(-\log(\f{q}(p,\lambda)) \right), & \xi= 0 .
\end{cases}
\end{equation}

The hazard function is given by
\begin{equation}
h(t;\boldsymbol{\theta},\lambda)=\frac{\lambda\, r(t;\sigma,\mu,\xi)^{\xi+1}e^{-r(t;\sigma,\mu,\xi)}}{\sigma\left(e^{\lambda(1-  \exp\{-r(t;\sigma,\mu,\xi)\})}-1\right)}.
\end{equation}
Figure \ref{fhazard-gev} gives examples of different shapes for the hazard function in which allows us to fit data with increasing, decreasing and unimodal hazard rate.
\begin{figure}[!htb]
	\centering
	\includegraphics[scale=0.5]{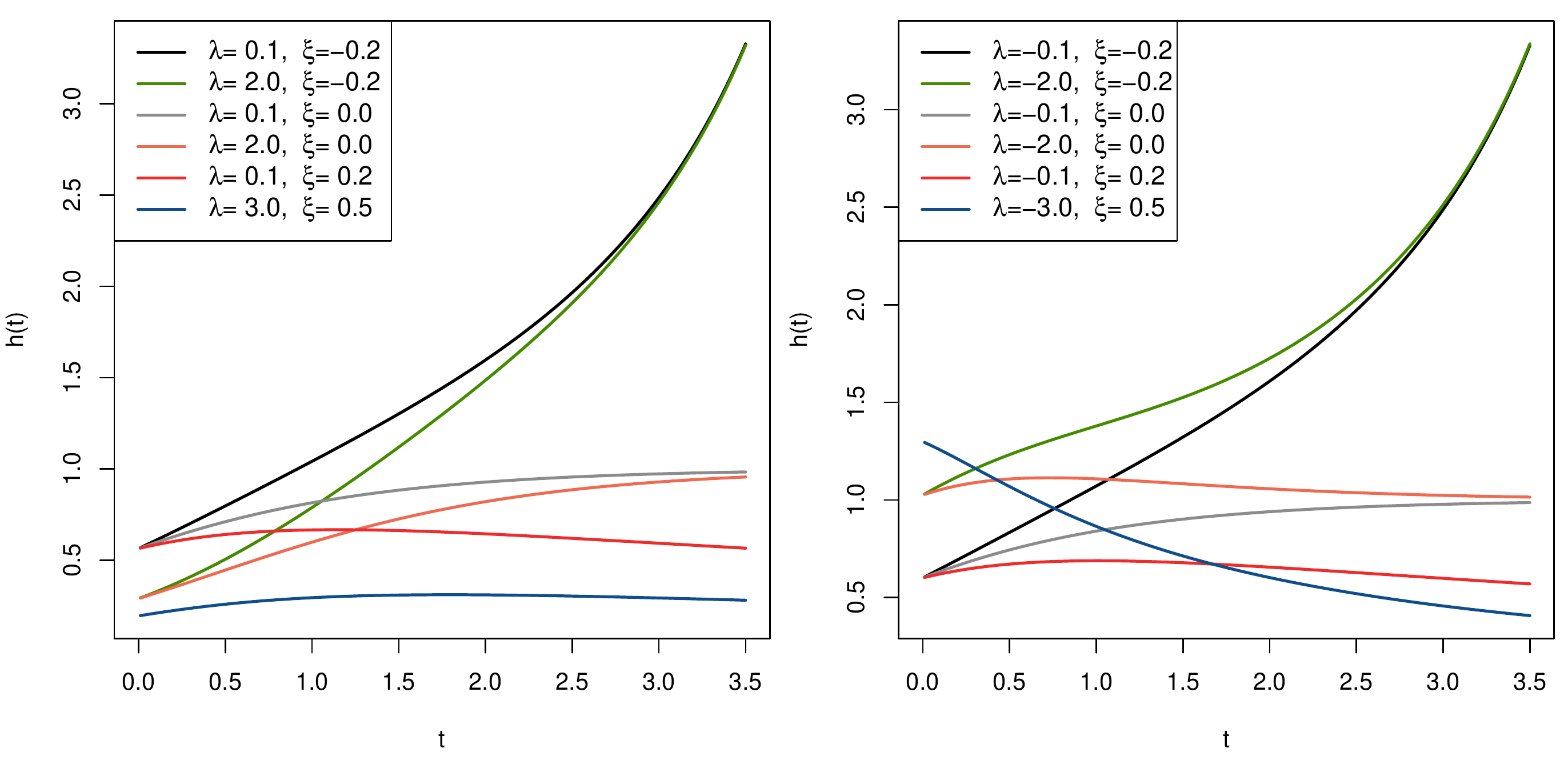}
	\caption{Hazard function shapes for EGEVP distribution for $\mu=0,\sigma=1$  and considering different values of $\lambda$ and $\xi$.}\label{fhazard-gev}
\end{figure}


\subsection{Other compound models	based on the Poisson distribution}

Many distributions have already been proposed using the minimum or maximum activation mechanism. Tahir and Cordeiro \cite{tahir2016compounding} presented an interesting discussion about compounding different distributions. They reviewed some already introduced distributions based on the zero truncated Poisson distribution such as: modified EP, exponentiated EP, Beta-Weibull Poisson, complementary modified Weibull-Poisson, complementary exponentiated Weibull-Poisson, Log-logistic generalized Weibull Poisson, Lai-modified Weibull-Poisson, Exponentiated Lomax-Poisson, complementary Poisson-Lomax, Lindley-Poisson, complementary extended Lindley-Poisson, Poisson Birnbaum-Saunders, exponentiated-Burr XII Poisson, complementary Burr III Poisson, complementary failure rate Poisson and the complementary exponentiated power Lindley-Poisson distribution (see \cite{tahir2016compounding} and references therein). Our approach can be applied in any of these distributions unifying the minimum/maximum without including an extra parameter.

This approach can be applied in various models used in other areas. For instance, Macera et al. \cite{macera2015exponential} introduced a new model for recurrent event data characterized by a baseline rate function based on the exponential-Poisson distribution. Following Zhao and Zhou \cite{zhao2012modeling}, the authors presented a rate model which
is derived from a nonhomogeneous Poisson process, with a hazard rate function $h(w;t)=h_0(w+t)$. The rate function of recurrence process $N(w+t_{j-1})$ is given by
\begin{equation}\label{raterede2p}
h(w;t_{j-1},\phi, \beta)=\frac{\phi\beta e^{-\beta(w+t_{j-1})}}{1-\exp\left(-\phi e^{-\beta(w+t_{j-1})}\right)},
\end{equation}
where $w, \phi, \beta>0$. The rate function (\ref{raterede2p}) has decreasing behavior. Further, Louzada et al. \cite{louzada2015poisson} developed a similar study based on the Poisson-exponential distribution. Both models can be unified in one where the p.d.f of the new distribution for recurrent event data where baseline rate function is EPE distribution is given by
\begin{equation}\label{daterede2p}
f(w;t_{j-1},\phi, \beta)=\frac{\phi\beta\exp\left(-\beta(w+t_{j-1})\right)+\phi e^{-\beta(w+t_{j-1})}}{\exp\left(\phi e^{-\beta t_{j-1}} \right)-1},
\end{equation}
where $w, \beta>0$ and $\phi\in \mathbb R^*$.  The hazard function for $\phi>0$ ($\phi<0$) has decreasing (increasing) shape.

\begin{table}[!h]
\centering
\caption{The Bias(MSE) for for the estimates of $\boldsymbol{\theta}$ and $\lambda$ considering different sample sizes.}\vspace{3mm}
{\footnotesize
\begin{tabular}{c|r|r|r|r|r|r|r|r}
\hline
\multicolumn{1}{c|}{Model} & \multicolumn{1}{c|}{C} & \multicolumn{1}{c|}{n} &  \multicolumn{1}{c|}{$\lambda_1$} & \multicolumn{1}{c|}{$\beta_1$} & \multicolumn{1}{c|}{$\alpha_1$} & \multicolumn{1}{c|}{CP($\lambda_1$)} & \multicolumn{1}{c|}{CP($\beta_1$)} & \multicolumn{1}{c}{CP($\alpha_1$)} \\
\hline
\multirow{10}{*}{EEP} &          & 50  & 0.2549(0.9006) & 0.1708(0.4473) & ------------------ & 97.90$\%$ & 96.00$\%$ & --------- \\
                              &  & 100 & 0.1064(0.4753) & 0.0711(0.2316) & ------------------ & 96.50$\%$ & 94.80$\%$ & --------- \\
                        & $30\%$ & 200 & 0.0163(0.2392) & 0.0163(0.1123) & ------------------ & 95.60$\%$ & 95.10$\%$ & --------- \\
                               & & 300 & 0.0243(0.1647) & 0.0177(0.0754) & ------------------ & 94.90$\%$ & 94.70$\%$ & --------- \\ 
                               & & 400 & 0.0168(0.1221) & 0.0143(0.0562) & ------------------ & 94.70$\%$ & 94.70$\%$ & --------- \\  \cline{2-9}
                                & & 50 & 0.2761(1.0738) & 0.1864(0.6596) & ------------------ & 97.50$\%$ & 95.60$\%$ & --------- \\                              &  & 100 & 0.1630(0.5745) & 0.1154(0.3599) & ------------------ & 96.90$\%$ & 95.90$\%$ & --------- \\
                        & $50\%$ & 200 & 0.0192(0.3256) & 0.0174(0.1979) & ------------------ & 96.80$\%$ & 95.50$\%$ & --------- \\ 
                               & & 300 & 0.0056(0.2160) & 0.0055(0.1279) & ------------------ & 96.20$\%$ & 95.40$\%$ & --------- \\ 
                               & & 400 & 0.0056(0.1684) & 0.0063(0.0985) & ------------------ & 95.20$\%$ & 94.80$\%$ & --------- \\ 
\hline
\multirow{10}{*}{EWP} &           & 50 &  0.0265(2.2973) & -0.0057(0.2259) & 0.0306(0.0227) & 99.10$\%$ & 97.90$\%$ & 97.10$\%$ \\ 
                              &  & 100 & -0.3926(1.5546) & -0.1251(0.1515) & 0.0478(0.0185) & 97.90$\%$ & 97.60$\%$ & 95.90$\%$ \\ 
                        & $30\%$ & 200 & -0.1407(0.9269) & -0.0464(0.0874) & 0.0174(0.0089) & 98.70$\%$ & 98.60$\%$ & 96.10$\%$ \\ 
                               & & 300 & -0.1813(0.7587) & -0.0605(0.0760) & 0.0180(0.0065) & 98.20$\%$ & 97.90$\%$ & 96.40$\%$ \\ 
                               & & 400 & -0.1584(0.5957) & -0.0536(0.0617) & 0.0165(0.0050) & 98.60$\%$ & 98.10$\%$ & 97.20$\%$ \\   \cline{2-9}
                                & & 50 &  0.8497(3.7558) &  0.1942(0.2694) &-0.0259(0.0212) & 99.80$\%$ & 97.40$\%$ & 96.70$\%$ \\                              &  & 100 & -0.2244(1.5073) & -0.0641(0.1348) & 0.0393(0.0182) & 99.30$\%$ & 99.00$\%$ & 97.30$\%$ \\ 
                        & $50\%$ & 200 & -0.0646(1.0365) & -0.0284(0.0839) & 0.0174(0.0121) & 98.90$\%$ & 98.80$\%$ & 96.10$\%$ \\ 
                               & & 300 & -0.1240(0.9821) & -0.0443(0.0779) & 0.0189(0.0101) & 98.30$\%$ & 97.90$\%$ & 94.90$\%$ \\ 
                               & & 400 & -0.2153(0.8449) & -0.0696(0.0775) & 0.0244(0.0088) & 98.00$\%$ & 98.10$\%$ & 95.00$\%$ \\ 
\hline
\multirow{10}{*}{G2EP} &         &  50 & -0.3867(8.8114) & -0.1703(0.2602) &  0.3855(1.9715) & 96.00$\%$ & 95.40$\%$ & 95.10$\%$ \\
                              &  & 100 & -0.0695(4.8959) & -0.1201(0.1265) &  0.0865(0.6004) & 96.10$\%$ & 96.50$\%$ & 93.50$\%$ \\
                        & $30\%$ & 200 &  0.0706(2.3952) & -0.0865(0.0577) & -0.0558(0.2426) & 95.20$\%$ & 96.00$\%$ & 92.90$\%$ \\ 
                               & & 300 &  0.1075(1.5907) & -0.0766(0.0381) & -0.1032(0.1592) & 96.30$\%$ & 95.30$\%$ & 92.00$\%$ \\ 
                               & & 400 &  0.1123(1.1818) & -0.0726(0.0292) & -0.1239(0.1231) & 96.50$\%$ & 94.80$\%$ & 91.10$\%$ \\  \cline{2-9} 
                               & &  50 & -0.6999(10.6788) & -0.2683(0.4341) &  0.4395(2.1936) & 97.60$\%$ & 95.40$\%$ & 95.60$\%$ \\                              &  & 100 & -0.3131(6.5059)  & -0.1973(0.2485) &  0.1273(0.6688) & 98.10$\%$ & 96.10$\%$ & 93.70$\%$ \\
                        & $50\%$ & 200 & -0.0891(3.2484)  & -0.1363(0.1185) & -0.0217(0.2730) & 98.00$\%$ & 97.20$\%$ & 92.90$\%$ \\ 
                               & & 300 & -0.0296(2.0539)  & -0.1150(0.0748) & -0.0707(0.1774) & 96.90$\%$ & 96.90$\%$ & 92.70$\%$ \\ 
                               & & 400 & -0.0032(1.5007)  & -0.1043(0.0550) & -0.0956(0.1347) & 96.50$\%$ & 96.10$\%$ & 92.10$\%$ \\ 
\hline

\end{tabular}
}
\label{tableres1}
\end{table}

\section{Numerical evaluation}

In this section a simulation study is presented to in order to check the efficiency of the ML estimates under random censoring by computing the bias and the mean square errors (MSE), given by
\begin{equation*}
\f{Bias}_{i}\frac{1}{N}\sum_{j=1}^{N}\left(\hat\Theta_{i,j}-\Theta_i\right) \ \ \mbox{ and } \ \ \f{MSE}_{i}=\sum_{j=1}^{N}\frac{(\hat\Theta_{i,j}-\Theta_i)^2}{N}, \quad i=1,\ldots, k,
\end{equation*} 
where $\boldsymbol{\Theta}=(\boldsymbol{\theta},\lambda)$ are the parameters related to $G(t\,;\boldsymbol{\Theta})$ and $N=200,000$ is the number of estimates obtained through the ML estimators. Under this approach, the best estimators should provide both Bias and MSE closer to zero. In addition, the $95\%$ coverage probability (CP$_{95\%}$) of the confidence intervals are also evaluated in which for a large $N$ under $95\%$ confidence level, the frequencies of intervals that covered the true values of $\boldsymbol{\Theta}$ should be closer to $95\%$. 

The simulation study was carry out using the software R and the sample sizes were $n=(50,100,200,400)$. The distributions used in the simulation study are the EP, EW and the G2EP distribution. The chosen values to perform this study were $(\lambda_1,\beta_1)=(2,3)$ and $(\lambda_2,\beta_2)=(-5,0.5)$ for the EP distribution, $(\lambda_1,\beta_1,\alpha_1)=(3,2,0.5)$ and $(\lambda_2,\beta_2,\alpha_2)=(-3,0.3,1.5)$ for the EW distribution and $(\lambda_1,\beta_1,\alpha_1)=(2,2,1.5)$ and $(\lambda_2,\beta_2,\alpha_2)=(-1,0.2,0.8)$ for the G2EP distribution. However, the following results were similar for other choices of $\boldsymbol{\Theta}$. The samples were generated using random sampling with respectively $30\%$ and $50\%$ of censoring. Tables \ref{tableres1} and \ref{tableres2} present the Bias, MSEs and $CP_{95\%}$ for the obtained estimates.

\begin{table}[t]
\centering
\caption{The Bias(MSE) for for the estimates of $\boldsymbol{\theta}$ and $\lambda$ considering different sample sizes.}\vspace{3mm}
{\footnotesize
\begin{tabular}{c|r|r|r|r|r|r|r|r}
\hline
\multicolumn{1}{c|}{Model} & \multicolumn{1}{c|}{C} & \multicolumn{1}{c|}{n} &  \multicolumn{1}{c|}{$\lambda_1$} & \multicolumn{1}{c|}{$\beta_1$} & \multicolumn{1}{c|}{$\alpha_1$} & \multicolumn{1}{c|}{CP($\lambda_1$)} & \multicolumn{1}{c|}{CP($\beta_1$)} & \multicolumn{1}{c}{CP($\alpha_1$)} \\
\hline
\multirow{10}{*}{EEP} &          & 50  &  0.1919(2.5781) & 0.1175(0.1228) & ------------------ & 98.10$\%$ & 98.10$\%$ & --------- \\
                              &  & 100 &  0.0581(2.0995) & 0.0625(0.0520) & ------------------ & 99.20$\%$ & 97.70$\%$ & --------- \\
                        & $30\%$ & 200 & -0.3527(2.4040) & 0.0085(0.0410) & ------------------ & 96.90$\%$ & 94.20$\%$ & --------- \\ 
                               & & 300 & -0.3098(1.2802) &-0.0094(0.0168) & ------------------ & 99.10$\%$ & 94.60$\%$ & --------- \\ 
                               & & 400 & -0.0905(1.4225) & 0.0220(0.0263) & ------------------ & 97.30$\%$ & 97.30$\%$ & --------- \\  \cline{2-9}
                                 & & 50 &  0.2003(2.6696) &  0.1170(0.1074) & ------------------ & 99.90$\%$ & 97.80$\%$ & --------- \\                               &  & 100 &  0.1508(2.6916) &  0.0827(0.0718) & ------------------ & 99.80$\%$ & 99.10$\%$ & --------- \\
                         & $50\%$ & 200 &  0.1447(2.9435) &  0.0908(0.0892) & ------------------ & 98.00$\%$ & 97.00$\%$ & --------- \\ 
                                & & 300 & -0.4525(1.9526) & -0.0168(0.0207) & ------------------ & 96.60$\%$ & 91.90$\%$ & --------- \\ 
                                & & 400 & -0.2489(1.5965) &  0.0053(0.0258) & ------------------ & 96.10$\%$ & 94.30$\%$ & --------- \\ 
\hline
\multirow{10}{*}{EWP} &           & 50 & 1.3965(6.4446) & 0.3083(0.2869) &-0.0231(0.0654) & 92.00$\%$ & 96.20$\%$ & 95.80$\%$ \\
                              &  & 100 & 0.8055(2.5338) & 0.1431(0.0854) & 0.0093(0.0284) & 95.50$\%$ & 98.50$\%$ & 96.60$\%$ \\
                        & $30\%$ & 200 & 0.4309(1.1024) & 0.0653(0.0249) & 0.0112(0.0133) & 97.80$\%$ & 99.40$\%$ & 96.10$\%$ \\ 
                               & & 300 & 0.1442(0.6853) & 0.0252(0.0112) & 0.0094(0.0092) & 97.70$\%$ & 99.60$\%$ & 95.20$\%$ \\ 
                               & & 400 & 0.1388(0.6151) & 0.0236(0.0101) & 0.0025(0.0066) & 97.50$\%$ & 99.40$\%$ & 95.00$\%$ \\  \cline{2-9}
                               & &  50 & 2.8575(11.6493) & 0.6097(0.6121) & -0.0465(0.1151) & 90.10$\%$ & 95.40$\%$ & 96.00$\%$ \\                              &  & 100 & 1.6103(5.9828)  & 0.3198(0.2502) & -0.0131(0.0499) & 92.10$\%$ & 97.30$\%$ & 95.60$\%$ \\
                        & $50\%$ & 200 & 1.0107(3.1389)  & 0.1830(0.1148) &  0.0089(0.0209) & 93.10$\%$ & 97.80$\%$ & 96.30$\%$ \\ 
                               & & 300 & 0.5147(1.5409)  & 0.0867(0.0422) &  0.0092(0.0138) & 96.50$\%$ & 98.40$\%$ & 94.90$\%$ \\ 
                               & & 400 & 0.2043(0.8367)  & 0.0370(0.0175) &  0.0086(0.0102) & 98.00$\%$ & 98.40$\%$ & 94.90$\%$ \\ 
\hline
\multirow{10}{*}{G2EP} &         &  50 & 0.4124(1.2689) & 0.0569(0.0148) &  0.0200(0.0349) & 98.00$\%$ & 96.00$\%$ & 95.90$\%$ \\ 
                              &  & 100 & 0.2857(0.6519) & 0.0327(0.0063) & -0.0039(0.0155) & 97.50$\%$ & 96.40$\%$ & 94.90$\%$ \\ 
                        & $30\%$ & 200 & 0.1981(0.3934) & 0.0229(0.0036) & -0.0058(0.0105) & 96.90$\%$ & 97.00$\%$ & 95.30$\%$ \\ 
                               & & 300 & 0.1786(0.3091) & 0.0190(0.0026) & -0.0108(0.0081) & 96.10$\%$ & 96.90$\%$ & 94.50$\%$ \\  
                               & & 400 & 0.0712(0.2904) & 0.0098(0.0023) & -0.0144(0.0062) & 95.50$\%$ & 95.60$\%$ & 91.10$\%$ \\   \cline{2-9} 
                                & & 50 & 0.9410(2.4972) & 0.1239(0.0400) &  0.0123(0.0410) & 97.60$\%$ & 95.10$\%$ & 95.40$\%$ \\                              &  & 100 & 0.5445(1.2879) & 0.0667(0.0159) & -0.0012(0.0195) & 96.30$\%$ & 96.70$\%$ & 95.30$\%$ \\
                        & $50\%$ & 200 & 0.4949(1.0424) & 0.0552(0.0116) & -0.0158(0.0136) & 94.30$\%$ & 95.40$\%$ & 94.70$\%$ \\ 
                               & & 300 & 0.4626(0.8084) & 0.0503(0.0088) & -0.0162(0.0107) & 93.10$\%$ & 95.40$\%$ & 93.40$\%$ \\ 
                               & & 400 & 0.4259(0.5968) & 0.0408(0.0057) & -0.0181(0.0077) & 91.60$\%$ & 95.30$\%$ & 90.50$\%$ \\ 
\hline

\end{tabular}
}
\label{tableres2}
\end{table}

From these results, we observed that both Bias and MSE  tend to zero as there is an increase of n, i.e., the ML estimators are asymptotically unbiased for the parameters. Moreover, the coverage probability tends to the nominal level as n increase. Other estimation procedures can be considered for these models. For instance, Rodrigues et al. \cite{rodrigues2016poisson} compared ten different estimation methods for the parameters of PE distribution under complete data and concluded that a minimum distance estimator provided better results than the ML estimators, a similar study can be conducted in the presence of censored data and for the other models. Additionally, it is important to point out that a simulation study was not presented for EGEVP, since its ML estimators showed to be non-identifiable, leading to different roots depending on the data set, in this case the conditions for the asymptotic properties were not fulfill and confidence intervals could not be constructed. Further research are need considering other estimation procedures for this particular model.

\newpage

\section{Application}

In this section we considered a data set related to failure time of devices of an airline company. The study of its failure can prevent
customer dissatisfaction and customer attrition which avoid company loss. Table 1 presents the data related to failure time of (in days) of $131$ devices in an aircraft (+ indicates the presence of censorship).

\begin{table}[!h]
	\caption{Data set related to the failure time of $131$ devices in an aircraft.}\vspace{3mm}
\centering 
	{\begin{tabular}{c c c c c c c c c c c c c } 
		\hline 
	 36 & 15 &  4 &  9 & 20 &  2 & 127+ & 97 & 28 & 22 & 329+ & 158 & 43 \\
	 45 & 21 & 24 &  9 & 84+ & 237 & 56+ & 18 &  2 &  1 &  2 &  9 &  4 \\ 
	1 &  2 &  1 & 19 & 20 &  3 &  1 &  2 &  1 &  6 & 10 &  7 & 33 \\ 
	16 &  2 & 17 & 10 &  8 & 30+ & 25 & 13 & 36 &  7 &  2 &  2 & 93 \\
	44 &  3 &  3 & 12 & 11 &  1 & 15 & 16 &  2 & 18 & 10 & 18 & 76 \\
	16 & 92 &  3 & 28 & 53+ & 29 & 46 & 11 & 94 & 95 &  1 & 33 & 40 \\
	22 & 12 & 15 & 46 & 20 & 53+ & 74 & 126 & 27 & 14 & 22 & 79 & 15 \\
		8 & 68 & 81+ & 51 &  7 &  2 & 20 & 24 & 11 & 16 &  3 & 42 &  2 \\ 
		10 & 52 &  5 & 46 &  5 & 37 & 14 & 40 & 95+ & 24 & 10 &  3 & 20 \\
		167+ & 44 &  8 &  1 & 18 & 28 & 17 & 11 & 10 & 16 & 79 & 20 & 55 \\
		115+\\ [0ex] 
		\hline 
	\end{tabular}}\label{tableairplane}
\end{table}

Different discrimination criterion methods based on log-likelihood function evaluated at the ML estimates were also considered. The discrimination criterion methods are respectively: Akaike information criterion (AIC) computed through $\f{AIC}=-2l(%
\boldsymbol{\hat{\Theta}};\boldsymbol{t})+2k$ and the corrected Akaike information
criterion $\f{AICC}=\f{AIC}+\frac{2\,k\,(k+1)}{(n-k-1)}$, where $k$ is the number of
parameters to be fitted and $\boldsymbol{\hat{\Theta}}$ is the estimates of $\boldsymbol{\Theta}$. The best model is the one which provides the minimum values
of those criteria. Table \ref{discairplane} presents the results of AIC, AICc criteria, for different probability distributions.

\begin{table}[ht]
	\caption{Results of AIC, AICc criteria for different probability distributions considering the data set related to the failure time of $131$ of devices in an aircraft.}\vspace{3mm}
\centering 
	{\begin{tabular}{c|c|c|c|c}
			\hline
			Test   &  EP & EW & GE2P & EGEV  \\ \hline
			AIC    & 1084.38 & \textbf{1084.04} & 1085.73 & 1085.75 \\ 
			AICc   & 1084.48 & \textbf{1084.22} & 1086.05 & 1085.94 \\ \hline
		\end{tabular}}\label{discairplane}
\end{table}

Figure \ref{grafico-obscajust1} presents the survival function adjusted by different distributions and the Kaplan-Meier estimator.

\begin{figure}[!htb]
	\centering
	\includegraphics[scale=0.5]{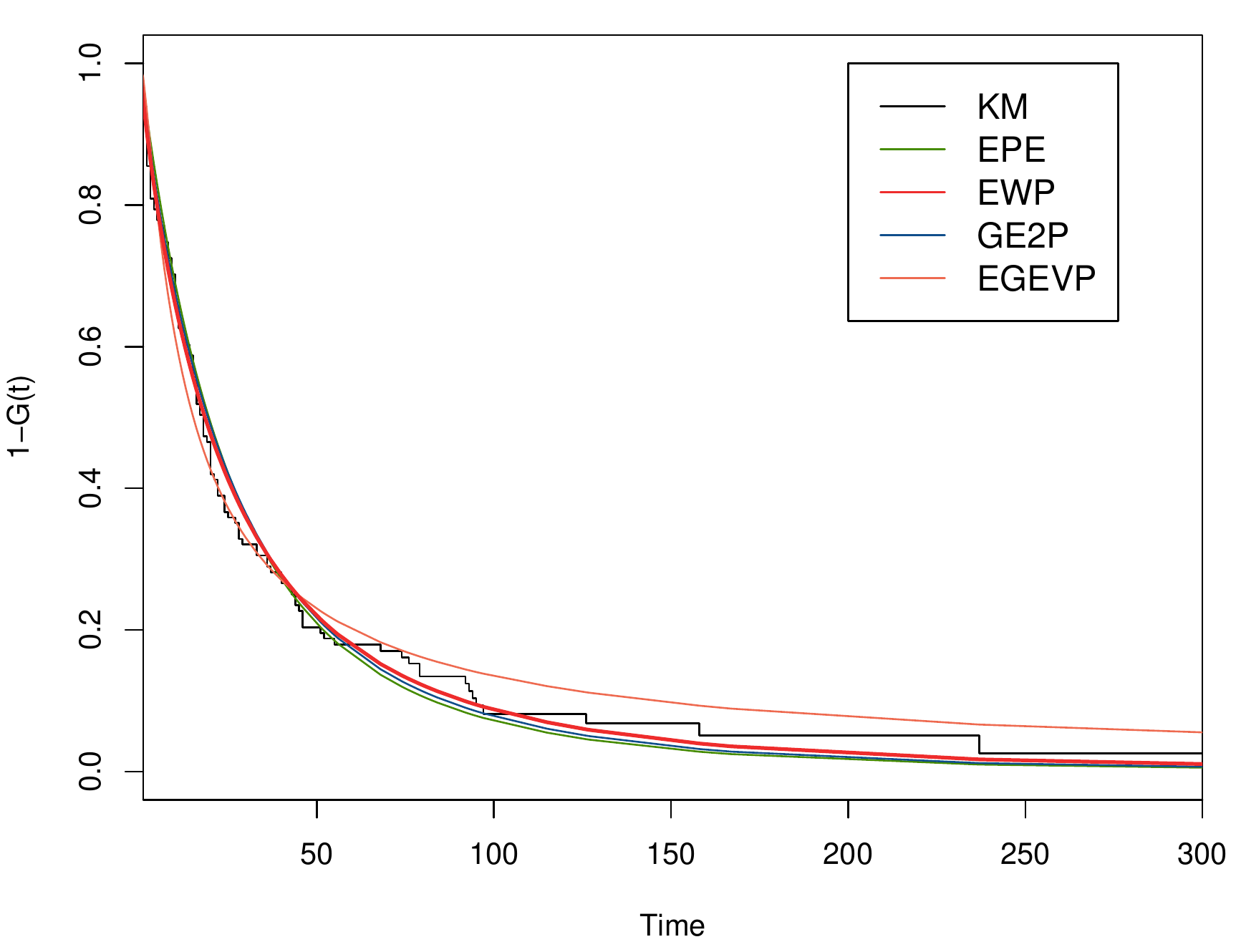}
	\caption{Survival function adjusted by different distributions and the Kaplan-Meier estimator considering data set related to the failure time of $131$ devices.}\label{grafico-obscajust1}
\end{figure}

Comparing the empirical survival function with the adjusted models we observed a goodness of  the fit EW distribution.Additionaly, from the results obtained by the AIC, AICc the EW returned the minimum value, i.e., among the proposed models the EW distribution fits better the data related to the failure time of aircraft devices. Table \ref{resairplanetab} displays the ML estimates, standard-error (SE) and  $95\%$ confidence intervals for $\lambda$, $\beta$ and $\alpha$.

\begin{table}[ht]
	\caption{ML estimates (MLE), SE (standard errors) and  $95\%$ confidence intervals for $\lambda$, $\beta$ and $\alpha$}\vspace{3mm}
	\centering 
{\small
	{\begin{tabular}{ c | c |  c| c }
			\hline
			$\boldsymbol{\Theta}$  & MLE & SE & $CI_{95\%}(\boldsymbol{\Theta})$ \\ \hline
			\ \ $\lambda$   \ \  &  -3.68674 &   3.09309 &  (-7.13376;  -0.23971) \\ \hline
				\ \ $\beta$ \ \   & 0.01463 &  0.00004 &  (0.00189; 0.02737)  \\ \hline
			\ \ $\alpha$   \ \  &  0.89760 &  0.00466 &  (0.76379; 1.03141) \\ \hline
		\end{tabular}}\label{resairplanetab} }
\end{table}

Under this approach the data set allow us to discovery if the activation mechanism comes from the minimum or maximum. Since the ML estimates of $\lambda$ returned negative value we concluded the activation mechanism comes from the minimum of Weibull distributions, i.e., if $T_i, \, i=1,\ldots,131$ is our data set than $T_i$ represents the lifetime of the minimum of $X_{i,j}, j=1,\ldots,N$  where $X_{i,j}$ follows a Weibull distribution and N is random and not observable. 

\section{Concluding remarks}

In this paper we proposed a new approach to generate flexible parametric families of distributions for modeling survival data. These models arise on CCR scenario, where the latent variables have a zero truncated Poisson distribution. We observed that if $\lambda<0$ ($\lambda>0$) then random variable represents the lifetime of the minimum (maximum) among all elements in risk. Therefore, the extra shape parameter has an important physical interpretation in CCR modeling. 

Moreover, we also proved that depending on the behavior of the baseline hazard function, some of the resulting models will be able to fit data with zero value (instantaneous failures). The parameter estimators are also discussed considering the ML estimation in the presence of random censoring. Furthermore, our proposed methodology is applied in common distributions such as Exponential, Weibull, among other. Many other distributions are also cited, for those our results are valid and may be applied further with success. Finally our proposed methodology is used to describe an real data set related to failure time of devices in an aircraft.

There are a large number of possible extensions of this current work. The presence of covariates as well as long-term survivals are very common in practice \cite{perdona2011}. Our approach should be investigated in these contexts. Another possible approach is to consider bivariate versions using the idea presented by Marshall and Olkin \cite{marshall1997new}.

\section*{Disclosure statement}

No potential conflict of interest was reported by the author(s).

\bibliographystyle{tfs}

\bibliography{reference}

\end{document}